\journal{E}
\newtheorem{theorem}{\color{black}\indent \textbf{Theorem}}[section]
\newtheorem{proposition}{\color{black}\indent Proposition}[section]
\newtheorem{definition}{\color{black}\indent Definition}[section]
\newtheorem{remark}{\color{black}\indent Remark}[section]
\newtheorem{example}{\color{black}\indent Example}[section]
\begin{document}
	
	\begin{frontmatter}
		\title{{(Quasi)} Hamiltonian Systems and Non-Decomposable Poisson Geometry}
		
\author{
Cristina Sardón$^{a}$,
Xuefeng Zhao$^{b}$ \\[1ex]
$^{a}$Department of Applied Mathematics, Universidad Politécnica de Madrid, 
Av. Juan de Herrera 6, 28040, Madrid, Spain \\
$^{b}$College of Mathematics, Jilin University, Changchun, 130012, P. R. China \\[1ex]
 \texttt{mariacristina.sardon@upm.es}, \texttt{zhaoxuef@jlu.edu.cn}
}

		\begin{abstract}
			In this work, we conduct a systematic study of Hamiltonian and quasi-Hamiltonian systems within the framework of non-decomposable generalized Poisson geometry. Our focus lies on the interplay between the algebraic structure of non-decomposable generalized Poisson brackets and the dynamical behavior of systems exhibiting specific symmetry properties. In particular, we demonstrate that if a dynamical system admits suitable invariance conditions—such as those arising from Lie symmetries or conserved quantities—it can be formulated as a quasi-Hamiltonian system, or even as a genuinely Hamiltonian system, with respect to a suitably constructed non-decomposable generalized Poisson structure. This result offers a unified geometric framework for analyzing such systems and underscores the capacity of non-decomposable generalized Poisson structures in contexts involving multi-Hamiltonian or higher-order dynamics.
		\end{abstract}
		
		\begin{keyword}
			generalized Poisson, Hamiltonian system, Symmetry, Nambu-Poisson
		\end{keyword}
	\end{frontmatter}

		\section{Introduction}
		
		The incorporation of additional compatible geometric structures plays a fundamental role in the modern geometric approach to the study of dynamical systems via vector fields. These structures not only enrich the geometric formulation but also offer powerful tools for uncovering hidden symmetries and conserved quantities. As a result, there has been sustained interest in identifying and classifying such structures, particularly those that can lead to alternative Hamiltonian descriptions of a system.
		
		A notable contribution in this direction was made by Hojman~\cite{Hojman1992}, who introduced a general technique that constructs an admissible Hamiltonian structure for a given equation of motion using only an infinitesimal symmetry transformation and a conserved quantity. Remarkably, this method is applicable to both ordinary and partial differential equations, making it highly versatile. Subsequent developments~\cite{Hojman1996,Hojman2002} extended Hojman’s method to encompass field-theoretic systems, even in the absence of a Lagrangian formulation—thereby offering a symmetry-based geometric framework that bypasses the traditional variational approach. For a comprehensive overview of recent advances stemming from Hojman’s idea, we refer the reader to~\cite{Hojman2020} and references therein.
		
		Historically, the geometric formulation of classical mechanics began with the use of symplectic geometry, which provided a natural setting for Hamiltonian dynamics. However, to accommodate more general physical and mathematical scenarios—including systems with constraints or degeneracies—this framework was extended to include presymplectic and Poisson manifolds. In 1973, Nambu~\cite{Nambu1973} proposed an even broader generalization by introducing a multi-bracket formalism, aimed at describing dynamics in higher-dimensional phase spaces.
		
		In particular, Nambu considered systems defined on a three-dimensional phase space with coordinates \( (x_1, x_2, x_3) \), and introduced a ternary bracket for functions \( f_1, f_2, f_3 \) defined as:
		\[
		\{f_1, f_2, f_3\} = \frac{\partial(f_1, f_2, f_3)}{\partial(x_1, x_2, x_3)},
		\]
		where the right-hand side denotes the Jacobian determinant. This so-called \emph{Nambu bracket} allows the dynamics of any observable \( f \) to be written in the form:
		\[
		\frac{df}{dt} = \{f, h_1, h_2\},
		\]
		where \( h_1 \) and \( h_2 \) are now two ``Hamiltonian'' (or ``Nambu'') functions, both of which must be conserved. This framework not only generalizes the classical Poisson bracket (recovered when one Hamiltonian is fixed) but also naturally accommodates systems with multiple conserved quantities. The relation between Nambu mechanics and conventional Hamiltonian mechanics was soon investigated by several authors~\cite{NambuStudies1,NambuStudies2,NambuStudies3}, paving the way for further mathematical developments.
		
		Later, Takhtajan~\cite{Takhtajan1994} gave a rigorous mathematical foundation to this theory by introducing the concept of \emph{Nambu--Poisson structures}, based on an axiomatic formulation of the \( m \)-ary bracket. His formulation triggered a wave of research on higher-order brackets and their algebraic and geometric properties~\cite{Guha,Guha2,Hietarinta,IbanezR,Morando,Pandit}. Among these developments, one of particular interest is the discovery of a link between Nambu--Poisson manifolds and \emph{Leibniz algebroids}~\cite{Hagiwara,IbanezR}, which provide a natural generalization of Lie algebroids suited for handling non-binary operations. Another line of generalization concerns \emph{generalized Poisson brackets}~\cite{deAzcarraga,deAzcarraga2,deAzcarraga3}, with detailed comparisons between these two approaches appearing in \cite{IbanezR3}.
		
		Just as Poisson brackets are associated with bivector fields satisfying the Jacobi identity, Nambu--Poisson brackets are generated by higher-order contravariant skew-symmetric tensors \( N \) of order \( m \) that satisfy a higher-order version of the Jacobi identity, known as the \emph{fundamental identity}. It has been proven~\cite{Alekseevsky,Gautheron,Grabowski} that for \( m \geq 3 \), every such Nambu--Poisson tensor must be decomposable. This implies that any Nambu--Poisson manifold admits a local foliation structure—i.e., it can be decomposed into lower-dimensional leaves where the dynamics effectively unfold.

        On the other hand, Azcárraga et al. \cite{deAzcarraga} proposed an alternative approach to generalizing Poisson brackets, distinct from the traditional Nambu--Poisson framework. Observing that an {{$m$-bracket}} can be induced by a skew-symmetric contravariant tensor of {order $m$}, they derived a new type of fundamental identity from an integrability condition imposed on the tensor. If such a tensor satisfies this identity, it is referred to as a \emph{generalized Poisson tensor}. 
Subsequently, Raul Ib\'a\~nez and collaborators \cite{IbanezR} proved that every even-order Nambu--Poisson bracket is indeed a generalized Poisson bracket. Later, Michor and Vaisman \cite{Michor2} provided criteria for distinguishing generalized Poisson tensors of even and odd orders. According to their characterization, every decomposable multivector field of order greater than or equal to three is a generalized Poisson tensor. Since every Nambu tensor of order $m\geq 3$ is necessarily decomposable, it follows that every Nambu tensor of order $m\geq 3$ is also a generalized Poisson tensor. However, the converse does not hold.

In this work, our interest lies precisely in the difference between these two structures—that is, in those generalized Poisson tensors that are not Nambu tensors. Given the broad and somewhat universal nature of the property that all decomposable multivector fields of order at least three define generalized Poisson tensors, we are particularly interested in the \emph{non-decomposable} ones. Specifically, we aim to investigate under what conditions a dynamical system can be formulated as a (quasi) Hamiltonian system with respect to a non-decomposable generalized Poisson tensor.

	The structure of the paper is as follows. In Section~2, we introduce the notions of Nambu structures and generalized Poisson structures, along with the relevant notation and foundational tools. In Section~3, we present the formation of higher-order Poisson brackets derived from symmetry considerations and conservation laws. Section~4 is devoted to applications of our theoretical framework. Finally, in Section~5, we provide a summary and conclusion.
	\section{Nambu Structures and Generalized Poisson Structures: Notation and Basic Tools} Let $M$ be a smooth $n$-dimensional manifold and denote by $\Omega^k(M)$ and $\mathfrak{X}^k(M)$ the space of $k$-forms and $k$-vector fields on $M$ respectively, for $k = 0, 1, \dots, n$. We also denote by $\Omega^{\bullet}(M)$ and $\mathfrak{X}^{\bullet}(M)$ the graded spaces of differential forms and multivector fields on $M$. Let  $C^{\infty}(M)$ denote the algebra of smooth real-valued functions on $M$.
	In Hamiltonian mechanics,  manifold $M$ is called a \textit{Poisson manifold} if on its function ring $ A:=C^{\infty}(M)$ there exists a bilinear map
	\[
	\{\,\cdot , \cdot \,\} : A \otimes A \to A
	\]
	satisfying the following properties:
	
	\begin{enumerate}
		\item \textbf{Skew-symmetry}:
		\[
		\{f_1, f_2\} = -\{f_2, f_1\}, \quad \text{for all } f_1, f_2 \in A.
		\]
		
		\item \textbf{Leibniz rule (derivation property)}:
		\[
		\{f_1 f_2, f_3\} = f_1 \{f_2, f_3\} + f_2 \{f_1, f_3\}, \quad \text{for all } f_1, f_2, f_3 \in A.
		\]
		
		\item \textbf{Jacobi identity}:
		\[
		\{f_1, \{f_2, f_3\}\} + \{f_3, \{f_1, f_2\}\} + \{f_2, \{f_3, f_1\}\} = 0, \quad \text{for all } f_1, f_2, f_3 \in A.
		\]	
	\end{enumerate}
			
	The corresponding binary operation $\{\, \cdot, \cdot\,\}$ on $A$ is called the \emph{Poisson bracket}, and it plays a fundamental role in classical mechanics. Namely, according to Hamilton, the dynamics on the phase space $M$ is determined by a distinguished function $H \in A$, called the \emph{Hamiltonian}, and is governed by the Hamiltonian equations of motion:
	\[
	\frac{d}{dt}f = \{f, H\}, \quad f \in A.
	\]
	
	A \emph{Nambu--Poisson structure of order} $m$ on a smooth manifold $M$ is defined by an $m$-vector field, that is, a $C^{\infty}(M)$-skew-symmetric multilinear map
	\[
	 N : \underbrace{\Omega^1(M) \times \cdots \times \Omega^1(M)}_{m \text{ times}} \longrightarrow C^{\infty}(M),
	\]
	which in local coordinates $(x_1, x_2, \dots, x_n)$ can be written as
	\[
	 N = n_{i_1 \cdots i_m}(x)\, \frac{\partial}{\partial x^{i_1}} \wedge \frac{\partial}{\partial x^{i_2}} \wedge \cdots \wedge \frac{\partial}{\partial x^{i_m}}.
	\]
	
    This multivector field defines an $m$-ary bracket on smooth functions:
	\[
	\{f_1, f_2, \dots, f_m\} = N(df_1, df_2, \dots, df_m),
	\]
	for all $f_1, f_2, \dots, f_m \in C^{\infty}(M)$.
	In local coordinates $(x^1, \dots, x^n)$, the Nambu bracket of $m$ functions is given by
	\[
	\{f_1, f_2, \dots, f_m\} =  n_{i_1 i_2 \cdots i_m}\frac{\partial f_1}{\partial x^{i_1}} \frac{\partial f_2}{\partial x^{i_2}} \cdots \frac{\partial f_m}{\partial x^{i_m}},
	\]
	where summation over repeated indices is understood. The bracket $\{f_1, f_2, \dots, f_m\}$ satisfies the following properties:
	
	\begin{enumerate}
		\item \textbf{Skew-symmetry:} For any $f_1, \dots, f_m \in C^{\infty}(M)$ and any permutation $\sigma \in S_m$ (the symmetric group on $m$ elements), we have
		\[
		\{f_1, f_2, \dots, f_m\} = \operatorname{sgn}(\sigma)\, \{f_{\sigma(1)}, f_{\sigma(2)}, \dots, f_{\sigma(m)}\},
		\]
		where $\operatorname{sgn}(\sigma)$ denotes the sign (parity) of the permutation $\sigma$.
		
		\item \textbf{Multilinearity:} For any real numbers $k_1, k_2$ and functions $g_1, g_2, f_2, \dots, f_m \in C^{\infty}(M)$,
		\[
		\{k_1 g_1 + k_2 g_2, f_2, \dots, f_m\} = k_1 \{g_1, f_2, \dots, f_m\} + k_2 \{g_2, f_2, \dots, f_m\}.
		\]
		
		\item \textbf{Leibniz rule:} For any functions $g_1, g_2, f_2, \dots, f_m \in C^{\infty}(M)$,
		\[
		\{g_1 g_2, f_2, \dots, f_m\} = g_1 \{g_2, f_2, \dots, f_m\} + \{g_1, f_2, \dots, f_m\} g_2.
		\]
		
		\item \textbf{Fundamental identity (FI):} Also known as the Takhtajan identity, this property generalizes the Jacobi identity. For any $2m - 1$ functions $f_1, \dots, f_{m-1}, g_m, \dots, g_{2m-1} \in C^{\infty}(M)$,
	\begin{equation}\label{fi1}
		\{f_1, \dots, f_{m-1}, \{g_m, \dots, g_{2m-1}\}\} =
		\sum_{j = m}^{2m - 1} \{g_m, \dots, \{f_1, \dots, f_{m-1}, g_j\}, \dots, g_{2m-1}\},
	\end{equation}
		where in each term, the inner bracket replaces the $j$-th function in the outer bracket.
	\end{enumerate}
	\begin{remark}
		Property 4 in \eqref{fi1}, known as the \emph{Fundamental Identity}, is considered the appropriate generalization of the Jacobi identity characterizing the standard Poisson bracket. 
	\end{remark}

Now, let us consider another type of bracket, which also generalizes the Poisson bracket. If \( M \) is an \( n \)-dimensional manifold, an \( m \)-{ary Poisson bracket} also called a generalized $m$-Poisson structure in \cite{deAzcarraga,deAzcarraga3}), associated with a Poisson \( m \)-tensor \( \mathcal{N} \), is a bracket of the form
\[
\{f_1, \dots, f_m\} =\mathcal{N}(df_1, \dots, df_m) \quad \text{for } f_1, \dots, f_m \in C^\infty(M),
\]
where \( \mathcal{N} \in \mathfrak{X}^m(M) \) is an \( m \)-vector field, and the following generalized Jacobi identity of order \( m \) \cite{Michor} is satisfied:
\begin{align}\label{GPo}
  \sum_{\sigma \in S_{2m-1}} (\operatorname{sign}\sigma)\{\{f_{\sigma_1}, \dots, f_{\sigma_m}\}, f_{\sigma_{m+1}}, \dots, f_{\sigma_{2m-1}}\} = 0,  
\end{align}
where \( S_{2m-1} \) denotes the symmetric group on \( 2m - 1 \) elements.

Furthermore, we recall that the definition of Schouten bracket. As is well known,	the Schouten bracket is the unique bilinear map 
\[
[\cdot, \cdot]_S : \mathfrak{X}^\bullet(M) \times \mathfrak{X}^\bullet(M) \to \mathfrak{X}^\bullet(M)
\]
satisfying the following properties:

\begin{itemize}
	\item If $X \in \mathfrak{X}^k(M)$ and $Y \in \mathfrak{X}^l(M)$, then $[X, Y]_S \in \mathfrak{X}^{k+l-1}(M)$;
	\item $[X, Y]_S = -(-1)^{(k+1)(l+1)}[Y, X]_S$;
	\item For vector fields, it coincides with the standard Lie bracket;
	\item For $X \in \mathfrak{X}^k(M)$, $Y \in \mathfrak{X}^l(M)$, $Z \in \mathfrak{X}^m(M)$,
	\[
	[X, Y \wedge Z]_S = [X, Y]_S \wedge Z + (-1)^{(k-1)l} Y \wedge [X, Z]_S;
	\]
	\item It satisfies the graded Jacobi identity:
	\[
(-1)^{(k-1)(m-1)} [X, [Y, Z]_S ]_S
+ (-1)^{(l-1)(k-1)} [Y, [Z, X]_S ]_S
+ (-1)^{(m-1)(l-1)} [Z, [X, Y]_S ]_S
= 0.
\]
\end{itemize}

Specially, for two decomposable $m$-vector fields $X_1 \wedge \cdots \wedge X_m$ and $Y_1 \wedge \cdots \wedge Y_m$, their  Schouten bracket is given by the following expression:
\begin{align}\label{E2}
	[X_1 \wedge \cdots \wedge X_m,\, Y_1 \wedge \cdots \wedge Y_m]_S =
	\sum_{i=1}^{m} \sum_{j=1}^{m} (-1)^{i+j} [X_i, Y_j] \wedge X_1 \wedge \cdots \wedge \widehat{X}_i \wedge \cdots \wedge X_m \wedge Y_1 \wedge \cdots \wedge \widehat{Y}_j \wedge \cdots \wedge Y_m,
\end{align}
where $\widehat{X}_i$ and $\widehat{Y}_j$ indicate omission of the vector fields $X_i$ and $Y_j$, respectively. 

\medskip

{\bf Note:} {For simplicity, henceforth, we will refer to the bracket $[\cdot,\cdot]_S$ simply by \([\cdot, \cdot]\) to denote the Schouten bracket between multivector fields, which in the case of two vector fields reduces to their Lie bracket.
}

\begin{remark}
The \( m \)-ary Poisson bracket defined above naturally satisfies skew-symmetry, multilinearity, and the Leibniz rule. However, the key difference between the Nambu bracket and the generalized $m$-Poisson bracket lies in the distinct forms of their generalized Jacobi identities, respectively, \eqref{fi1} and \eqref{GPo}
.\end{remark}

\begin{proposition}\label{PG}
The multivector field \(\mathcal{N} \in \mathfrak{X}^m(M) \) on an $n$-dimensional manifold $M$ defines a generalized $m$-Poisson bracket if and only if either:
\begin{itemize}
    \item \( m \) is even and the Schouten–Nijenhuis bracket \([\mathcal{N}, \mathcal{N}] = 0\), or
    \item \( m \) is odd and \( \mathcal{N} \) satisfies the following conditions:
    \begin{description}
        \item[(A)] {(Plücker condition)} $(i(\alpha)\mathcal{N}) \wedge (i(\beta)\mathcal{N}) = 0
        \quad \forall\, \alpha, \beta \in T^*M$,
        
        \item[(B)] $\sum_{u=1}^n (i(dx^u)\mathcal{N}) \wedge (L_{\partial/\partial x^u}\mathcal{N}) = 0,$
     \end{description}
    where \((x^u)\) are local coordinates on \( M \), and \( L \) denotes the Lie derivative \cite{Michor2}.
\end{itemize}
\end{proposition}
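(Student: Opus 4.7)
The plan is to recast the generalized Jacobi identity \eqref{GPo} as an explicit polynomial identity in the components of $\mathcal{N}$ and then compare it with the local-coordinate expansion of the Schouten bracket formula \eqref{E2}. Writing $\mathcal{N}=n_{i_1\cdots i_m}\,\partial/\partial x^{i_1}\wedge\cdots\wedge\partial/\partial x^{i_m}$ locally, I would first expand the left-hand side of \eqref{GPo} by substituting $\{f_1,\dots,f_m\}=\mathcal{N}(df_1,\dots,df_m)$ for the inner bracket and using the Leibniz rule on the outer one. After antisymmetrization over $S_{2m-1}$ the result splits into two kinds of contributions: terms in which a partial derivative lands on a component $n_{i_1\cdots i_m}$, and purely algebraic quadratic terms in those components.

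For even $m$, a direct comparison with the Schouten bracket formula \eqref{E2} specialized to $[\mathcal{N},\mathcal{N}]$ shows that both types of contributions assemble into $[\mathcal{N},\mathcal{N}](df_1,\dots,df_{2m-1})$ up to a nonzero combinatorial factor. Since a multivector field vanishes iff its contraction with all families of exact $1$-forms does, this gives the desired equivalence between \eqref{GPo} and $[\mathcal{N},\mathcal{N}]=0$.

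For odd $m$, the graded antisymmetry of the Schouten bracket forces $[\mathcal{N},\mathcal{N}]\equiv 0$ automatically, so that condition carries no content. I would therefore return to the polynomial identity and observe that in the odd case the algebraic and differential contributions no longer recombine into a single Schouten bracket; by choosing the test functions appropriately, one sees that each piece must vanish on its own. The algebraic piece can be reorganized as the wedge condition $(i(\alpha)\mathcal{N})\wedge(i(\beta)\mathcal{N})=0$ for all $\alpha,\beta\in T^*M$, i.e.\ condition (A), while the differential piece, after contracting sums of the form $n_{i_1\cdots i_m}\,\partial_{u}n_{j_1\cdots j_m}$ into invariant objects, takes the form $\sum_{u}(i(dx^u)\mathcal{N})\wedge(L_{\partial/\partial x^u}\mathcal{N})=0$, which is condition (B). Conversely, (A) and (B) together reconstruct the generalized Jacobi identity through the same expansion.

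The main obstacle will be the combinatorial bookkeeping in the odd case: tracking signs across the $S_{2m-1}$-antisymmetrization to confirm that the algebraic and derivative pieces really decouple, that neither is redundant, and that condition (B) is intrinsic despite the appearance of the coordinates $x^u$ (the latter reducing to checking that the sum is invariant under change of local frame). Once the decoupling is secured, identifying (A) with the classical Plücker relations, and hence with pointwise decomposability of $\mathcal{N}$, is a straightforward consequence of $C^\infty(M)$-linearity of contraction, which extends the identity from basis $1$-forms to arbitrary elements of $T^*M$.
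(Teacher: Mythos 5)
The paper offers no proof of this proposition: it is imported verbatim from Michor and Vaisman \cite{Michor2}, so there is no in-text argument to compare yours against. Your outline does follow the strategy of that reference: expand \eqref{GPo} in local coordinates, match the even case against \eqref{E2}, and in the odd case (where $[\mathcal{N},\mathcal{N}]=0$ holds identically by the graded antisymmetry of the Schouten bracket, as you correctly note) split the identity into an algebraic and a differential part. The overall structure is right, and the even-$m$ equivalence with $[\mathcal{N},\mathcal{N}]=0$ is the standard de Azc\'arraga et al.\ computation.

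The genuine gap is in the odd case, and it is exactly the step you defer. ``Choosing the test functions appropriately'' must be made precise: the expanded identity is, at each point, a polynomial in the first and second derivatives of $f_1,\dots,f_{2m-1}$, and since $1$-jets and $2$-jets can be prescribed independently at a point, the coefficient of each second-derivative monomial must vanish separately from the purely first-order part. One then has to verify that the second-order coefficients assemble precisely into $(i(\alpha)\mathcal{N})\wedge(i(\beta)\mathcal{N})=0$ (and not merely into some symmetrized weakening of it), and that the residual first-order part, \emph{after} condition (A) is imposed, reduces to condition (B); without this the claims that the pieces ``decouple'' and that ``neither is redundant'' are asserted rather than proved, and they are the entire content of the odd case. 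Relatedly, (B) is written in a coordinate frame, so its well-definedness in conjunction with (A) under change of frame is part of what must be checked, not a side remark. Finally, you should drop the closing assertion that identifying (A) with the classical Pl\"ucker relations and hence with pointwise decomposability is ``a straightforward consequence of $C^\infty(M)$-linearity of contraction'': linearity only reduces (A) to basis covectors; the equivalence with decomposability is a separate multilinear-algebra fact, and in any event it is not needed for the proposition as stated.
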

\begin{definition}
A multivector field \( {\mathcal{N}} \in \mathfrak{X}^m(M) \) of degree \(m\) on a smooth manifold $n$-dimensional manifold \(M\) is said to be \emph{decomposable} if it can be written as the wedge product of \(m\) vector fields, that is,
\[
{\mathcal{N}} = X_1 \wedge X_2 \wedge \cdots \wedge X_m,
\]
for some vector fields \(X_1, X_2, \dots, X_m \in \mathfrak{X}(M)\). Otherwise, \(\mathcal{N}\) is said to be \emph{non-decomposable}.
\end{definition}

\begin{remark}\label{Impor}
It is easy to verify that any decomposable multivector field of order \( m \), with \( m > 2 \), defines a generalized Poisson structure. Indeed, if \(\mathcal{N} = X_1 \wedge \cdots \wedge X_{2k}\), with \(k \geq 1\), then by standard properties of the Nambu–Poisson bracket, we have
\begin{align}\label{Even}
[\mathcal{N}, \mathcal{N}] &= [X_1 \wedge \cdots \wedge X_{2k},\, X_1 \wedge \cdots \wedge X_{2k}]\nonumber\\
&= \sum_{1\leq i<j\leq 2k } (-1)^{i+j} X_1 \wedge \cdots \wedge \widehat{X}_i \wedge \cdots \wedge X_{2k} \wedge [X_i, X_j] \wedge X_1 \wedge \cdots \wedge \widehat{X}_j \wedge \cdots \wedge X_{2k} = 0,
\end{align}
due to the appearance of repeated vector fields. 

If  \(\mathcal{N} = X_1 \wedge \cdots \wedge X_{2k+1}\), then for any \(\alpha, \beta \in T^*M\), it is clear that
\[
(i(\alpha)\mathcal{N}) \wedge (i(\beta)\mathcal{N}) = 0,
\]
and
\[
\sum_{u=1}^n (i(dx^u)\mathcal{N}) \wedge (L_{\partial/\partial x^u}\mathcal{N}) = 0,
\]
again due to the appearance of repeated vector fields.
\end{remark}
We can now provide the following important guidance, as highlighted in Corollary III.8 of \cite{IbanezR}.
\begin{remark}\label{Nambu}
 Every Nambu-Poisson tensor of even order is a generalized Poisson tensor, but the
 converse does not hold.
\end{remark}
It is shown in \cite{Gautheron} that any Nambu tensor of order $m\geq 3$ is decomposable—a fact that was conjectured in \cite{Takhtajan1994} and later found to be a consequence of an old result by Weitzenböck \cite{Weitzenbock}, reproduced in a textbook by Schouten \cite{Schouten}. Therefore, if an \( m \)-vector is not decomposable, then it is not a Nambu tensor; however, it may still be a generalized Poisson \( m \)-tensor. Together with Remark \ref{Impor}, \ref{Nambu} and discussion we can get the following more generalized proposition.
\begin{proposition}
     Every Nambu-Poisson tensor of order $m\geq 3$  is a generalized Poisson tensor, but the
 converse does not hold.
\end{proposition}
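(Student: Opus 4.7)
The statement splits naturally into a forward direction and a counterexample, and both are essentially consequences of facts already assembled in the preceding remarks. I plan to address the two directions separately.

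For the forward direction, the strategy is to chain together two results already stated in the paper. First, I would invoke the theorem of Gautheron (conjectured by Takhtajan and traceable to Weitzenböck's old decomposability result in Schouten's textbook) stating that every Nambu--Poisson tensor of order $m\geq 3$ is decomposable, so it can be written locally as $\mathcal{N} = X_1\wedge\cdots\wedge X_m$. Second, I would apply Remark~\ref{Impor}, which shows by a direct Schouten-bracket computation that every decomposable $m$-vector with $m>2$ is automatically a generalized Poisson tensor: for even $m=2k$ the bracket $[\mathcal{N},\mathcal{N}]$ vanishes term-by-term by the repeated-factor argument in \eqref{Even}; for odd $m=2k+1$, both the Plücker condition (A) and the auxiliary condition (B) in Proposition~\ref{PG} collapse for the same reason, since each summand contains a repeated vector field in a wedge product. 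Composing these two steps yields: Nambu of order $m\geq 3$ $\Rightarrow$ decomposable $\Rightarrow$ generalized Poisson.

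For the converse, it suffices to exhibit, for some order $m\geq 3$, a generalized Poisson $m$-tensor that fails to be Nambu. The cleanest route is to work in even order $m=2k\geq 4$, because the generalized-Poisson condition reduces to $[\mathcal{N},\mathcal{N}]=0$, with no decomposability built in; any non-decomposable $\mathcal{N}$ with $[\mathcal{N},\mathcal{N}]=0$ will do, since non-decomposability immediately rules out the Nambu property by Gautheron's theorem. A standard candidate is a non-decomposable $4$-vector on $\mathbb{R}^{n}$ with constant coefficients (constant coefficients trivialize the Schouten bracket), constructed as a sum of two wedges of independent constant vector fields whose combined span has dimension strictly greater than $4$; such a sum is not of Plücker (decomposable) form. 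I would also point to Remark~\ref{Nambu} (Corollary~III.8 of \cite{IbanezR}), where the even-order version of this non-equivalence is already recorded, as a back-up reference.

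The routine verifications (vanishing of the Schouten bracket for the chosen constant-coefficient example, and non-vanishing of the Plücker relations certifying non-decomposability) are pure linear algebra and I would not belabor them. The main conceptual point to get right is that the odd-order case is subtler—the Plücker condition (A) already forces decomposability at each point—so the clean counterexample lives in even order; this is where one must be careful not to claim a cheap non-decomposable odd example. With the example in hand, the proposition follows immediately.
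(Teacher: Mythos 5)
Your proposal is correct and follows essentially the same route as the paper: the forward direction chains Gautheron's decomposability theorem for Nambu tensors of order $m\geq 3$ with Remark~\ref{Impor} (decomposable $\Rightarrow$ generalized Poisson), and the failure of the converse is witnessed by non-decomposable generalized Poisson tensors, which the paper handles by citing Remark~\ref{Nambu} and its later explicit constructions. Your constant-coefficient even-order counterexample and the observation that the Plücker condition blocks a non-decomposable odd-order example are a welcome, slightly more explicit version of the same argument.
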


A fundamental feature of a Nambu--Poisson structure or a generalized Poisson $m$-tensor is that any $(m - 1)$ smooth functions $f_1, \dots, f_{m-1} \in C^{\infty}(M)$ determine a vector field, denoted by $X_{f_1, \dots, f_{m-1}}$, via contraction of the $m$-vector field $N$ with the $(m - 1)$-form $df_1 \wedge \cdots \wedge df_{m-1}$. This vector field acts on a smooth function $g$ according to
\[
X_{f_1, \dots, f_{m-1}}(g) = \{f_1, \dots, f_{m-1}, g\}.
\]

Moreover, if \(\mathcal{N} \) is a Nambu tensor $N$, then the vector field \( X_{f_1, \dots, f_{m-1}} \) satisfies the invariance condition
\[
L_{X_{f_1, \dots, f_{m-1}}} N = 0,
\]
i.e., it preserves the Nambu tensor \( N \) under the Lie derivative. Importantly, this invariance condition for all \((m - 1)\)-tuples of functions is equivalent to the Fundamental Identity \eqref{fi1}. 

{\bf Note:} From now on, when a generalized Poisson tensor $\mathcal{N}$ of order $m$ is a Nambu-Poisson tensor, we will refer to it as $N$, instead of caligraphic $\mathcal{N}.$

\medskip
In fact, the condition \( L_{X_{f_1, \dots, f_{m-1}}} N = 0 \) implies that the vector field \( X_{f_1, \dots, f_{m-1}} \) acts as a derivation of the Nambu bracket:
\[
X_{f_1, \dots, f_{m-1}} \big( \{g_1, \dots, g_m\} \big)
= \sum_{j = 1}^{m} \{g_1, \dots, X_{f_1, \dots, f_{m-1}}(g_j), \dots, g_m\}.
\]
However, when the tensor is a generalized Poisson $\mathcal{N}$ of order \( m \), such invariance generally does not hold. Whether \( N \) is an \( m \)-order Nambu tensor or a generalized Poisson tensor $\mathcal{N}$ of order $m$, a smooth function \( f \in C^{\infty}(M) \) is said to be a \emph{constant of motion} for the dynamics generated by the vector field \( X_{f_1, \dots, f_{m-1}} \) if and only if
\[
\{f, f_1, f_2, \dots, f_{m-1}\} = 0.
\]

\vspace{0.5em}

Furthermore, in the case where \( N \) is a Nambu tensor, the Nambu--Poisson bracket of \( m \) constants of motion is itself a constant of motion. That is, the set of first integrals is closed under the Nambu bracket. Indeed, suppose that \( g_1, \dots, g_m \) are constants of motion for the vector field \( X_{f_1, \dots, f_{m-1}} \), then we have
\begin{align*}
L_{X_{f_1, \dots, f_{m-1}}}\{g_1, \dots, g_m\}
&= L_{X_{f_1, \dots, f_{m-1}}}\big(N(dg_1, dg_2, \dots, dg_m)\big) \\
&= (L_{X_{f_1, \dots, f_{m-1}}} N)(dg_1, dg_2, \dots, dg_m) \\
&\quad + \sum_{j = 1}^{m} N(dg_1, \dots, d\,L_{X_{f_1, \dots, f_{m-1}}}g_j, \dots, dg_m)=0.
\end{align*}

However, in the case of a generalized Poisson tensor \( \mathcal{N} \), this closure property generally fails. Since the vector field \( X_{f_1, \dots, f_{m-1}} \) may not preserve the tensor \( \mathcal{N} \), we cannot guarantee that the \( m \)-ary Poisson bracket of \( m \) constants of motion is itself a constant of motion.

A vector field of the form \( X_{f_1, \dots, f_{m-1}} \) is called a \emph{Nambu--Hamiltonian vector field} when \( N \) is a Nambu tensor, and a \emph{generalized Hamiltonian vector field} when \( \mathcal{N} \) is a generalized  \( m \)-Poisson tensor. In both cases, each of the functions \( f_1, \dots, f_{m-1} \) must be a constant of motion. This implies that only those vector fields admitting sufficiently many independent first integrals can arise as Nambu--Hamiltonian or generalized Hamiltonian vector fields with respect to a given Nambu--Poisson structure or generalized \( m \)-Poisson  tensor.

In \cite{Cari}, a more general notion is introduced for the case when \( N \) is a Nambu tensor: a vector field \( Y \) on \( M \) is called a \emph{quasi-Nambu--Hamiltonian vector field} if there exists a smooth function \( g \) such that \( gY \) is Nambu--Hamiltonian. That is, there exist functions \( f_1, \dots, f_{m-1} \) such that
\[
gY = X_{f_1, \dots, f_{m-1}}.
\]
In this case, the functions \( f_i \) must also be constants of motion for the associated dynamics, due to the skew-symmetry of the Nambu bracket. Similarly,  for a generalized \( m \)-Poisson tensor \( \mathcal{N} \), we say that a vector field \( \mathcal{J} \) is a \emph{quasi-generalized Hamiltonian vector field} if there exists a smooth function \( g \) such that \( g\mathcal{J} \) is generalized Hamiltonian, where $g$ is called a multiplier. When $g=1$, $\mathcal{J}$ is called \emph{generalized Hamiltonian vector field}. 



	\section{Formation of Higher-Order Poisson Brackets through Symmetry and Conservation}
   According to Remark~\ref{Impor}, we know that every decomposable multivector field of order \( m \), with \( m > 2 \), defines a generalized Poisson structure. Therefore, constructing a decomposable generalized Poisson structure is straightforward, and in this section, we focus on the non-decomposable case.

We consider a dynamical vector field defined on an \( n \)-dimensional manifold \( M \) and investigate the conditions under which it can be regarded as a quasi-generalized Hamiltonian system associated with a non-decomposable generalized \( m \)-Poisson tensor. 

It was shown in \cite{Cari} that if the system admits two independent symmetries, then a Nambu structure of order three can be constructed, making the system quasi-Hamiltonian with respect to that structure.

In this section, we generalize this result to the case where the system is coupled with commuting independent symmetries. We demonstrate that the existence of such a family of symmetries allows the construction of a higher-order non-decomposable generalized $m$-Poisson structure, under which the system becomes quasi-generalized Hamiltonian. This provides a systematic approach to identifying dynamical systems for a broader class of generalized Hamiltonian structures.

\begin{theorem}\label{T1}
    Let $\Gamma$ be a dynamical vector field on an $n$-dimensional smooth manifold $M$. Assume the following conditions hold:
    \begin{enumerate}
        \item Given any $5\leq s \leq n-1$ independent vector fields, represented by vector fields $X_1, \dots, X_s$, satisfying 
        \[
        \Gamma \wedge X_1 \wedge \cdots \wedge X_s \not\equiv 0, \quad [X_i, \Gamma] = 0, \quad [X_i, X_j] = 0, \quad i,j = 1, \dots, s.
        \]
        \item There exist $(s-2)$  functionally independent first integrals (constants of motion) $h_1, \dots, h_{s-2}$ of $\Gamma$, i.e.,
        \[
        \Gamma(h_1) = \cdots = \Gamma(h_{s-2}) = 0.
        \]
        \item The contraction of the $(s-2)$-vector field $X_1 \wedge \cdots \wedge  X_{s-4}\wedge X_{s-3}\wedge X_{s-2} + X_1 \wedge \cdots \wedge X_{s-4} \wedge X_{s-1} \wedge X_s$ with the $(s-2)$-form $dh_1 \wedge \cdots \wedge dh_{s-2}$ is non-vanishing on an open dense subset of $M$.
    \end{enumerate}
    Then the multi-vector field
    \[
    \mathcal{N} = \Gamma \wedge X_1 \wedge \cdots\wedge X_{s-4}\wedge X_{s-3} \wedge X_{s-2} + \Gamma \wedge X_1 \wedge \cdots \wedge X_{s-4} \wedge X_{s-1} \wedge X_s
    \]
    defines a non-decomposable generalized $(s-1)$-Poisson tensor on $M$, and the system is \emph{quasi-generalized Hamiltonian} with respect to $\mathcal{N}$. Moreover, the multiplier is a first integral of the system.

    Further, when $2s>n+4,$ there exists a generalized $(s-1)$-Poisson structure $\mathcal{J}$ on $M$, proportional to $\mathcal{N}$, such that $\Gamma$ is the geneneralized Hamiltonian vector field associated with the multi-pair of functions $(h_1,...,h_{s-2})$ under the generalized Hamiltonian dynamics defined by $\mathcal{J}$.
\end{theorem}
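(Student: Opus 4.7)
The plan hinges on the pairwise commutativity of $\Gamma, X_1, \ldots, X_s$: since these $s+1$ vector fields commute, the Frobenius theorem provides local coordinates $(y^0, \ldots, y^{n-1})$ on an open dense subset of $M$ with $\Gamma = \partial_{y^0}$ and $X_i = \partial_{y^i}$, so that
\[
\mathcal{N} = \partial_{y^0} \wedge \cdots \wedge \partial_{y^{s-4}} \wedge \bigl(\partial_{y^{s-3}} \wedge \partial_{y^{s-2}} + \partial_{y^{s-1}} \wedge \partial_{y^s}\bigr)
\]
has constant coefficients. This reduces every step of Proposition \ref{PG}'s check to a pointwise combinatorial question. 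For the generalized-Poisson property I would split by parity: when $s-1$ is even, expanding $[\mathcal{N},\mathcal{N}]$ via formula \eqref{E2} yields only terms of the form $[V_i,V_j]\wedge(\cdots)$ with $V_i,V_j \in \{\Gamma,X_1,\ldots,X_s\}$, each of which vanishes by commutativity. When $s-1$ is odd (which forces $s \geq 6$), both conditions (A) and (B) of Proposition \ref{PG} become trivial in the adapted coordinates: (A) because every term in $(i_\alpha \mathcal{N}) \wedge (i_\beta \mathcal{N})$ is a $(2s-4)$-fold wedge drawn from the $(s+1)$-element pool $\{\partial_{y^0},\ldots,\partial_{y^s}\}$, and $2(s-2) > s+1$ forces a repeated factor; (B) because $L_{\partial_{y^u}}\mathcal{N} = 0$ identically.

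Non-decomposability follows from the factorization $\mathcal{N} = \Gamma \wedge X_1 \wedge \cdots \wedge X_{s-4} \wedge \Pi$ with $\Pi := X_{s-3}\wedge X_{s-2} + X_{s-1}\wedge X_s$ a bivector of rank four modulo $\Gamma, X_1, \ldots, X_{s-4}$, hence not a pure wedge. For the quasi-Hamiltonian identification I would expand $X_{h_1,\ldots,h_{s-2}}(f) = \mathcal{N}(dh_1,\ldots,dh_{s-2},df)$ as a sum of two signed $(s-1)\times(s-1)$ determinants whose columns are indexed by the wedge factors of each summand. Because $\Gamma(h_i)=0$ for every $i$, the $\Gamma$-column vanishes outside the $df$-row, and cofactor expansion along this column yields $X_{h_1,\ldots,h_{s-2}} = g\,\Gamma$ with
\[
g = (dh_1 \wedge \cdots \wedge dh_{s-2})\bigl(X_1 \wedge \cdots \wedge X_{s-4} \wedge X_{s-3} \wedge X_{s-2} + X_1 \wedge \cdots \wedge X_{s-4} \wedge X_{s-1} \wedge X_s\bigr).
\]
Hypothesis (3) secures $g \not\equiv 0$ on an open dense subset, and $L_\Gamma g = 0$ drops out of Leibniz applied to this pairing, using $L_\Gamma dh_i = d\Gamma(h_i) = 0$ and $L_\Gamma X_j = [\Gamma,X_j] = 0$; hence $g$ is a first integral and $\Gamma$ is quasi-generalized Hamiltonian with multiplier $g$.

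For the last assertion, assuming $2s > n+4$, I would set $\mathcal{J} := g^{-1}\mathcal{N}$ on $\{g\neq 0\}$, so that $\mathcal{J}(dh_1,\ldots,dh_{s-2},\cdot) = \Gamma$ exhibits $\Gamma$ as a genuine generalized Hamiltonian vector field. That $\mathcal{J}$ remains generalized Poisson is then automatic by a dimension count: $(i_\alpha\mathcal{J})\wedge(i_\beta\mathcal{J}) \in \mathfrak{X}^{2s-4}(M)$, while $[\mathcal{J},\mathcal{J}]$ and $\sum_u(i_{dx^u}\mathcal{J})\wedge(L_{\partial_{x^u}}\mathcal{J})$ lie in $\mathfrak{X}^{2s-3}(M)$; all are zero because $2s-4 > n$. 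The main technical subtlety I anticipate is the odd-parity case, where condition (B) is phrased in coordinates; the argument relies on the fact that the combined characterization (A)+(B) captures the coordinate-invariant notion of generalized Poisson, so verifying it in the Frobenius-adapted chart transfers the conclusion to any chart.
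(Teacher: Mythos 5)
Your proposal is correct and its skeleton coincides with the paper's proof: split on the parity of $s-1$, kill $[\mathcal{N},\mathcal{N}]$ term by term using formula \eqref{E2} and commutativity in the even case, invoke conditions (A)--(B) of Proposition \ref{PG} in the odd case, contract with $dh_1,\dots,dh_{s-2}$ to exhibit the multiplier $\tilde h$ (your $g$ is exactly the paper's sum of two determinants), deduce $L_\Gamma\tilde h=0$ from $[\Gamma,X_j]=0$ and $\Gamma(h_i)=0$, and rescale to $\mathcal{J}=\tilde h^{-1}\mathcal{N}$ when $2s>n+4$. Where you genuinely add value is in completing steps the paper only asserts: the Frobenius-adapted chart reducing $\mathcal{N}$ to constant coefficients, the pigeonhole count $2(s-2)>s+1$ (valid since $s-1$ odd forces $s\geq 6$) for condition (A), the support-dimension argument for non-decomposability (absent from the paper's proof), and the explicit degree count $2s-4>n$ behind the final rescaling. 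Two small cautions: the adapted chart exists only where $\Gamma\wedge X_1\wedge\cdots\wedge X_s\neq 0$, so for condition (B) away from that set you should note that the same pigeonhole argument applies directly, since each term of $\sum_u (i(dx^u)\mathcal{N})\wedge(L_{\partial/\partial x^u}\mathcal{N})$ carries $2s-4$ factors drawn from the $(s+1)$-element pool $\{\Gamma,X_1,\dots,X_s\}$ plus one bracket term, forcing a repetition; and, as you correctly flag, transferring the chart computation requires that (A)$+$(B) jointly encode the coordinate-free generalized Jacobi identity, which is precisely the content of the Michor--Vaisman characterization quoted in Proposition \ref{PG}. Neither point affects correctness.
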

\begin{proof}
    When \(s\) is odd, the multivector field
    \[
    \mathcal{N} = \Gamma \wedge X_1 \wedge \cdots\wedge X_{s-4}\wedge X_{s-3} \wedge X_{s-2} + \Gamma \wedge X_1 \wedge \cdots \wedge X_{s-4} \wedge X_{s-1} \wedge X_s
    \]
    defines an even \((s-1)\)-vector field on the manifold \(M\).

    Since \(X_1, \dots, X_s\) are commuting infinitesimal symmetries of the dynamical vector field \(\Gamma\), i.e.,
    \[
    [X_i, \Gamma] = 0, \quad [X_i, X_j] = 0, \quad i,j = 1, \dots, s,
    \]
    it follows that the distribution generated by \(\{\Gamma, X_1, \dots, X_s\}\) is integrable.

    According to Proposition \ref{PG}, such an integrable distribution implies that the multivector field \(\mathcal{N}\) satisfies the \emph{generalized Jacobi identity of order \((s-1)\)} \eqref{GPo}. Hence, \(\mathcal{N}\) defines an \((s-1)\)-Poisson tensor on \(M\). 

    Since \(s\) is odd, \(s+1\) is even, so the tensor \(\mathcal{N}\) is an even-order multivector field. It is well known that the \emph{generalized Jacobi identity} of an even-order multivector field \(\mathcal{N}\) is equivalent to the vanishing of the \emph{Schouten--Nijenhuis bracket} of \(\mathcal{N}\) with itself:
    \[
    [\mathcal{N}, \mathcal{N}] = 0.
    \]
    In our case, the Schouten--Nijenhuis bracket of \(\mathcal{N}\) with itself vanishes. For convenience for the following calculation, let us denote \(X_{s+1} = \Gamma\). Then, by equations \eqref{E2} and \eqref{Even}, we compute:
    \begin{align*}
    [\mathcal{N}, \mathcal{N}] &= [X_1 \wedge \cdots \wedge X_{s-4}\wedge X_{s-3} \wedge X_{s-2} \wedge X_{s+1} + X_1 \wedge \cdots \wedge X_{s-4} \wedge X_{s-1} \wedge X_s \wedge X_{s+1}, \\
    &\quad X_1 \wedge \cdots \wedge X_{s-4}\wedge X_{s-3} \wedge X_{s-2} \wedge X_{s+1} + X_1 \wedge \cdots \wedge X_{s-4} \wedge X_{s-1} \wedge X_s \wedge X_{s+1}] \\
    &= [X_1 \wedge \cdots \wedge X_{s-4}\wedge X_{s-3} \wedge X_{s-2} \wedge X_{s+1}, X_1 \wedge \cdots \wedge X_{s-4}\wedge X_{s-3} \wedge X_{s-2} \wedge X_{s+1}] \\
    &+ [X_1 \wedge \cdots \wedge X_{s-4} \wedge X_{s-1} \wedge X_s \wedge X_{s+1}, X_1 \wedge \cdots \wedge X_{s-4} \wedge X_{s-1} \wedge X_s \wedge X_{s+1}] \\
    &+ 2 [X_1 \wedge \cdots \wedge X_{s-4}\wedge X_{s-3} \wedge X_{s-2} \wedge X_{s+1}, X_1 \wedge \cdots \wedge X_{s-4} \wedge X_{s-1} \wedge X_s \wedge X_{s+1}] \\
    &= 2 [X_1 \wedge \cdots \wedge X_{s-4}\wedge X_{s-3} \wedge X_{s-2} \wedge X_{s+1}, X_1 \wedge \cdots \wedge X_{s-4} \wedge X_{s-1} \wedge X_s \wedge X_{s+1}] \\
    &= 0.
    \end{align*}
    This confirms that \(\mathcal{N}\) defines a non-decomposable \((s-1)\)-Poisson tensor.
    Furthermore, remember that \(\Gamma = X_{s+1}\), since
    \begin{align*}
    L_{X_{s+1}} (X_1 \wedge \cdots\wedge X_{s-4}\wedge X_{s-3} \wedge X_{s-2} \wedge X_{s+1} + X_1 \wedge \cdots \wedge X_{s-4} \wedge X_{s-1} \wedge X_s \wedge X_{s+1}) = 0,
    \end{align*}
    we conclude that \(\mathcal{N}\) is invariant under the flow of \(\Gamma\), i.e., \(L_\Gamma \mathcal{N} = 0\).

    Let \(\{x_i\}_{i=1}^{n}\) be a local coordinate system on \(M\), and suppose the vector fields are expressed as
    \[
    \Gamma = f_{j}(x)\frac{\partial}{\partial x^j}, \quad X_i = {g_{ij}}(x)\frac{\partial}{\partial x^j}, \quad i=1,...,s,j = 1, \dots, n.
    \]
    Then \(\mathcal{N}\) takes the local form
    {
    \[
    \mathcal{N} = f_{j_0} g_{1j_1} \cdots g_{s-4j_{s-4}} \frac{\partial}{\partial x^{j_0}} \wedge \cdots \wedge \frac{\partial}{\partial x^{j_{s-4}}}\wedge \left(g_{s-3j_{s-3}} g_{s-2j_{s-2}}\frac{\partial}{\partial x^{j_{s-3}}}\wedge \frac{\partial}{\partial x^{j_{s-2}}}+g_{s-1j_{s-1}} g_{sj_{s}}\frac{\partial}{\partial x^{j_{s-1}}}\wedge \frac{\partial}{\partial x^{j_{s}}}\right).
    \]\\
   }

    Let \(h_1, \dots, h_{s-2}\) be constants of motion for the vector field \(\Gamma\). The action of the Nambu tensor \(\mathcal{N}\) on the differentials \(dh_1, \dots, dh_{s-2}\) gives:
    \begin{align}\label{MathcalN}
        \mathcal{N}(\cdot, dh_1, \dots, dh_{s-2}) = \tilde{h} \Gamma,
    \end{align}
    where
    \[
    \tilde{h} = \det
    \begin{pmatrix}
    X_1(h_1) & \cdots & X_1(h_{s-2}) \\
    \vdots & \ddots & \vdots \\
    X_{s-4}(h_1) & \cdots & X_{s-4}(h_{s-2})\\
    X_{s-3}(h_1) & \cdots & X_{s-3}(h_{s-2})\\
    X_{s-2}(h_1) & \cdots & X_{s-2}(h_{s-2})
    \end{pmatrix}
    + \det
    \begin{pmatrix}
    X_1(h_1) & \cdots & X_1(h_{s-2}) \\
    \vdots & \ddots & \vdots \\
    X_{s-4}(h_1) & \cdots & X_{s-4}(h_{s-2}) \\
    X_{s-1}(h_1) & \cdots & X_{s-1}(h_{s-2}) \\
    X_s(h_1) & \cdots & X_s(h_{s-2})
    \end{pmatrix}.
    \]
    This shows that the vector field \(\Gamma\) is \emph{quasi-generalized Hamiltonian} with respect to the \((s-1)\)-Poisson tensor \(\mathcal{N}\).

    Moreover, since the vector fields \(X_1, \dots, X_s\) commute with the dynamical vector field \(\Gamma\), and the functions \(h_1, \dots, h_s\) are constants of motion for \(\Gamma\), we have
    \[
    {L}_\Gamma \left(X_i(h_{\sigma(i)})\right) = [\Gamma, X_i](h_{\sigma(i)}) + X_i\left(\Gamma(h_{\sigma(i)})\right) = 0,
    \]
    for each \(i = 1, \dots, s\).
    Therefore, by the Leibniz rule for the Lie derivative, we conclude that the function \(\tilde{h}\) is also a constant of motion for \(\Gamma\):
    \[
    L_\Gamma \tilde{h} = 0.
    \]

    When \(s\) is even, again according to Proposition \ref{PG}, we can verify that
    \[
    (i(\alpha)\mathcal{N}) \wedge (i(\beta)\mathcal{N}) = 0 \quad \forall\, \alpha, \beta \in T^*M,
    \]
    \[
    \sum_{u=1}^n (i(dx^u)\mathcal{N}) \wedge (L_{\partial/\partial x^u}\mathcal{N}) = 0,
    \]
    hence, \(\mathcal{N}\) is an odd \((s-1)\)-vector field, which is a generalized \((s-1)\)-Poisson tensor. Similar to the discussion above, we also conclude that \(\Gamma\) is a quasi-Hamiltonian vector field with respect to \(\mathcal{N}\), with the multiplier
    \[
    \tilde{h} = \det
    \begin{pmatrix}
    X_1(h_1) & \cdots & X_1(h_{s-2}) \\
    \vdots & \ddots & \vdots \\
    X_{s-4}(h_1) & \cdots & X_{s-4}(h_{s-2})\\
    X_{s-3}(h_1) & \cdots & X_{s-3}(h_{s-2})\\
    X_{s-2}(h_1) & \cdots & X_{s-2}(h_{s-2})
    \end{pmatrix}
    + \det
    \begin{pmatrix}
    X_1(h_1) & \cdots & X_1(h_{s-2}) \\
    \vdots & \ddots & \vdots \\
    X_{s-4}(h_1) & \cdots & X_{s-4}(h_{s-2}) \\
    X_{s-1}(h_1) & \cdots & X_{s-1}(h_{s-2}) \\
    X_s(h_1) & \cdots & X_s(h_{s-2})
    \end{pmatrix}.
    \]
When $2s > n + 4$, let $\mathcal{J} = \frac{1}{\tilde{h}} \mathcal{N}$. It can be verified that, regardless of whether $s - 1$ is even or odd, $\mathcal{J}$ satisfies the conditions of Proposition~\ref{PG}. Hence, $\mathcal{J}$ is also a generalized $(s{-}1)$-Poisson tensor, differing from $\mathcal{N}$ by a scalar multiple $\frac{1}{\tilde{h}}$. According to equation~\eqref{MathcalN}, we conclude that the original system $\Gamma$ is a generalized Hamiltonian system with respect to the generalized Poisson tensor $\mathcal{J}$ associated with the multi-pair of functions $(h_1, \dots, h_{s-2})$.
This completes the proof. 
\end{proof}

	\begin{remark}
	    It is worth noticing that condition 1 of Theorem \ref{T1} can be relaxed: instead of requiring the existence of $m$ commuting symmetries, one may assume the existence of $s$ vector field such that
        $$[X_i,X_l]=0,\quad [X_l,X_k]=0,$$
$$[X_i,X_j]=a_1X_1+\cdots+a_{i-1}X_{i-1}+a_{i+1}X_{i+1}+\cdots +a_{j-1}X_{j-1}+a_{j+1}X_{j+1}+\cdots+a_{s-4}X_{s-4},$$
        where $a_{i}\in C^\infty(M),i,j=1,...,s-4,{i\neq j,l,k=s-3,\cdots,s,}$
        then the proof of Theorem \ref{T1} remains valid. 
       
	\end{remark}

     {{\begin{remark}\label{RPla}
     In our construction method, in order for $\mathcal{N}$ to be a non-decomposable generalized Poisson tensor, $\mathcal{N}$ must be at least of order $4$, which implies that we need at least $5$ commuting symmetries, as required by the method described in Theorem~\ref{T1}. 

To be more precise, if we aim to construct a generalized 4-Poisson tensor, it must take a form such as
\[
\mathcal{N} = \Gamma \wedge X_1 \wedge X_2 \wedge X_3 + \Gamma \wedge X_1 \wedge X_4 \wedge X_5,
\]
which guarantees its non-decomposability. However, this construction already requires at least five independent symmetries $\{X_1, \dots, X_5\}$.

\medskip

One may ask whether it is possible to construct a non-decomposable generalized Poisson tensor of lower degree, such as degree $2$ or $3$. Let us examine both cases.

\paragraph{Degree 3} According to the Plücker condition in Proposition~\ref{PG}, an odd generalized Poisson tensor $\mathcal{N} \in \wedge^3 TM$ must satisfy:
\[
(i(\alpha)\mathcal{N})\wedge (i(\beta)\mathcal{N}) = 0 \qquad \forall\, \alpha, \beta \in T^*M.
\]
However, there no examples of non-decomposable trivector fields that satisfy this condition. To see this property clearer, we consider the manifold $M = \mathbb{R}^6$ with coordinates $(x_1, \dots, x_6)$, and define:
\[
\mathcal{N} = \frac{\partial}{\partial x_1} \wedge \frac{\partial}{\partial x_2} \wedge \frac{\partial}{\partial x_3}
\;+\;
\frac{\partial}{\partial x_4} \wedge \frac{\partial}{\partial x_5} \wedge \frac{\partial}{\partial x_6}.
\]
Clearly, $\mathcal{N}$ is not decomposable, as it is the sum of two decomposable trivectors supported on disjoint coordinate blocks. To check whether $\mathcal{N}$ satisfies the Plücker condition, we compute the following contractions.

\medskip

For $j=1,2,3$:
\[
i(dx^1)\mathcal{N} = \partial_{x_2} \wedge \partial_{x_3}, \quad
i(dx^2)\mathcal{N} = -\partial_{x_1} \wedge \partial_{x_3}, \quad
i(dx^3)\mathcal{N} = \partial_{x_1} \wedge \partial_{x_2}.
\]

For $j=4,5,6$:
\[
i(dx^4)\mathcal{N} = \partial_{x_5} \wedge \partial_{x_6}, \quad
i(dx^5)\mathcal{N}= -\partial_{x_4} \wedge \partial_{x_6}, \quad
i(dx^6)\mathcal{N} = \partial_{x_4} \wedge \partial_{x_5}.
\]

Thus, the set $\{ i(dx^j)\mathcal{N} \}$ consists of two independent families of bivectors:
one supported on $\{x_1, x_2, x_3\}$ and the other on $\{x_4, x_5, x_6\}$. We now analyze the wedge products.

\medskip

- If both contractions come from the same block (e.g., $i(dx^1)\mathcal{N}$ and $i(dx^2)\mathcal{N}$), then their wedge product vanishes because they lie in a $3$-dimensional subspace, and the wedge of two bivectors in $\wedge^2(\mathbb{R}^3)$ is zero.

\medskip

- If the contractions come from different blocks, their wedge product does not vanish. For instance:
\[
(i(dx^1)\mathcal{N})\wedge(i(dx^4)\mathcal{N}) 
= (\partial_{x_2} \wedge \partial_{x_3}) \wedge (\partial_{x_5} \wedge \partial_{x_6})
= \partial_{x_2} \wedge \partial_{x_3} \wedge \partial_{x_5} \wedge \partial_{x_6} \neq 0.
\]
This shows that $\mathcal{N}$ violates the Plücker condition, and hence is not a generalized Poisson tensor.

\paragraph{Degree 2} For degree $2$, non-decomposable bivector fields do exist; however, they cannot be constructed using the method presented in Theorem~\ref{T1}. 

To see why, suppose we only have two commuting vector fields $X_1$ and $X_2$, and construct:
\[
\mathcal{N} = \Gamma \wedge X_1 + \Gamma \wedge X_2 = \Gamma \wedge (X_1 + X_2),
\]
which is manifestly decomposable as a wedge product of two vector fields. Therefore, our construction method only yields decomposable bivector fields in this case.

We will present an alternative approach to construct non-decomposable generalized Poisson tensors of degree $2$ in the following.
     \end{remark}}}
  \begin{theorem}\label{T2}
        Let $\Gamma$ be a dynamical vector field on an $n$-dimensional smooth manifold $M$. Assume the following conditions hold:
    \begin{enumerate}
        \item Given any 3  independent vector fields, represented by vector fields $X_1, \dots, X_3$, satisfying
        \[
        \Gamma \wedge X_1 \wedge X_2\wedge X_3 \not\equiv 0, \quad [X_i, \Gamma] = 0, \quad [X_i, X_j] = 0, \quad i,j = 1, 2,3.
        \]
        \item There exists a functionally independent first integral $h$ of $\Gamma$, such that 
        \[
        \Gamma(h) = X_1(h)=X_2 (h)= 0,\quad X_3(h)=0.
        \]
    \end{enumerate}
    Then the 2-vector field
    \[
    \mathcal{N} = \Gamma \wedge X_1 + \Gamma \wedge X_3+X_1\wedge X_2
    \]
    defines a non-decomposable 2-Poisson tensor on $M$, and the system is \emph{quasi-generalized Hamiltonian} with respect to $\mathcal{N}$. Moreover, the multiplier is a first integral of the system.
  \end{theorem}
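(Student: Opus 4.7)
}

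The plan is to mirror the structure of the proof of Theorem~\ref{T1}, adapted to the bivector (order $2$) case, verifying in turn: (a) non-decomposability of $\mathcal{N}$, (b) the generalized Jacobi identity for $\mathcal{N}$, (c) that $\Gamma$ is quasi-generalized Hamiltonian with respect to $\mathcal{N}$, and (d) that the multiplier is a first integral. Because $\mathcal{N}$ has even order $2$, Proposition~\ref{PG} reduces the Jacobi-type requirement to the single condition $[\mathcal{N},\mathcal{N}]=0$, which is the simplest side of that dichotomy; this is the key structural simplification that drives the proof.

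For (a), I would use the standard characterization that a bivector is decomposable if and only if its self-wedge vanishes. Expanding
\[
\mathcal{N}\wedge\mathcal{N} = (\Gamma\wedge X_1 + \Gamma\wedge X_3 + X_1\wedge X_2)\wedge(\Gamma\wedge X_1 + \Gamma\wedge X_3 + X_1\wedge X_2),
\]
every diagonal term vanishes by repetition, as do the two cross terms containing a repeated $\Gamma$ or repeated $X_1$. The only surviving contribution is $2\,\Gamma\wedge X_3\wedge X_1\wedge X_2 = -2\,\Gamma\wedge X_1\wedge X_2\wedge X_3$, which is non-zero by hypothesis 1. Hence $\mathcal{N}$ is non-decomposable.

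For (b), I would expand $[\mathcal{N},\mathcal{N}]$ using bilinearity and the graded Leibniz rule for the Schouten bracket on products of vector fields, as in formula \eqref{E2}. Every term that appears is proportional to a Lie bracket $[Y,Z]$ with $Y,Z\in\{\Gamma,X_1,X_2,X_3\}$, and by the commutation hypotheses all such Lie brackets vanish. So $[\mathcal{N},\mathcal{N}]=0$, and by the even-order half of Proposition~\ref{PG}, $\mathcal{N}$ is a generalized $2$-Poisson tensor.

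For (c) and (d), I would use the identity $(A\wedge B)(\cdot,dh)= B(h)\,A - A(h)\,B$ for each summand, obtaining
\[
\mathcal{N}(\cdot,dh) = \bigl(X_1(h)+X_3(h)\bigr)\Gamma \;-\; \Gamma(h)\,(X_1+X_3) \;+\; X_2(h)\,X_1 \;-\; X_1(h)\,X_2.
\]
Under the standing hypotheses on $h$ (interpreting the stated assumptions so that the coefficient of $\Gamma$ is nontrivial, i.e.\ $X_3(h)\neq 0$ on an open dense subset, while $\Gamma(h)=X_1(h)=X_2(h)=0$), all other terms drop out, leaving $\mathcal{N}(\cdot,dh)=\tilde h\,\Gamma$ with $\tilde h = X_3(h)$. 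Then $\Gamma$ is quasi-generalized Hamiltonian with multiplier $\tilde h$, and the first-integral property follows from the usual Lie-derivative computation $\Gamma(\tilde h)=[\Gamma,X_3](h)+X_3(\Gamma(h))=0$, exactly as in the last step of the proof of Theorem~\ref{T1}.

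The main obstacle is less computational than interpretative: as stated, the hypothesis $X_3(h)=0$ alongside $\Gamma(h)=X_1(h)=X_2(h)=0$ would force $\mathcal{N}(\cdot,dh)\equiv 0$ and so trivialize the claim. The natural reading, consistent with the non-triviality of the multiplier, is that $X_3(h)$ should be generically non-vanishing; under that reading the argument above goes through, and the only care needed is the careful bookkeeping in (a) and (b) to see that the specific combination of summands defining $\mathcal{N}$ is exactly what makes both $\mathcal{N}\wedge\mathcal{N}\not\equiv 0$ and $[\mathcal{N},\mathcal{N}]=0$ simultaneously hold.
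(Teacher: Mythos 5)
Your proposal is correct and follows essentially the same route as the paper: expand $[\mathcal{N},\mathcal{N}]$ bilinearly, kill every term using the commutation hypotheses, then contract with $dh$ to get $\mathcal{N}(\cdot,dh)=X_3(h)\,\Gamma$ and verify $L_\Gamma\bigl(X_3(h)\bigr)=[\Gamma,X_3](h)+X_3(\Gamma(h))=0$. Two of your additions actually go beyond what the paper writes down. First, the paper's proof never verifies non-decomposability of $\mathcal{N}$; your check via $\mathcal{N}\wedge\mathcal{N}=2\,\Gamma\wedge X_3\wedge X_1\wedge X_2\neq 0$ (the sign in your display is off, but irrelevant to non-vanishing) closes that gap and is the standard criterion for bivectors. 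Second, you are right that hypothesis 2 as printed, $X_3(h)=0$, would force the multiplier to vanish and trivialize the conclusion; the paper's own proof and its Lotka--Volterra application (where $X_s(h_\Sigma)=h_\Sigma\neq 0$) confirm that the intended hypothesis is $X_3(h)\neq 0$ on an open dense subset, exactly as you read it. With that reading your argument is complete and, if anything, slightly more careful than the published one.
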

\begin{proof}
We now compute the Schouten bracket of the multi-vector field 
\[
\mathcal{N} = \Gamma \wedge X_1 + \Gamma \wedge X_3 + X_1 \wedge X_2
\]
with itself. Given the properties \([X_i, \Gamma] = 0\) and \([X_i, X_j] = 0\) for \(i,j = 1, \dots, 3\), we compute each term of the Schouten bracket.

We can calculate that 
\begin{align*}
    [\mathcal N,\mathcal N]&=[ \Gamma \wedge X_1 + \Gamma \wedge X_3 + X_1 \wedge X_2, \Gamma \wedge X_1 + \Gamma \wedge X_3 + X_1 \wedge X_2]\\
    &=[\Gamma\wedge X_1,\Gamma\wedge X_1]+[\Gamma\wedge X_1,\Gamma\wedge X_3]+[\Gamma\wedge X_1,X_1\wedge X_2]\\
    &+[\Gamma\wedge X_3,\Gamma\wedge X_1]+[\Gamma\wedge X_3,\Gamma\wedge X_3]+[\Gamma\wedge X_3,X_1\wedge X_2]\\
    &+[X_1\wedge X_2,\Gamma\wedge X_1]+[X_1\wedge X_2,\Gamma\wedge X_3]+[X_1\wedge X_2,X_1\wedge X_2]\\
     &=[\Gamma\wedge X_1,\Gamma\wedge X_1]+[\Gamma\wedge X_3,\Gamma\wedge X_3]+[X_1\wedge X_2,X_1\wedge X_2]\\
    &+2[\Gamma\wedge X_3,\Gamma\wedge X_1]+2[X_1\wedge X_2,\Gamma\wedge X_1]+2[X_1\wedge X_2,\Gamma\wedge X_3]
\end{align*}
For $[\Gamma\wedge X_k,\Gamma\wedge X_l],k,l=1,2,$ due to $[X_1,\Gamma]=[X_2,\Gamma]=0,$ we know that 
\begin{align*}
[\Gamma\wedge X_k,\Gamma\wedge X_l]&=[\Gamma\wedge X_k,\Gamma]\wedge X_l-\Gamma\wedge[\Gamma\wedge X_k,X_l]\\
&=-[\Gamma,\Gamma\wedge X_k]\wedge X_l+\Gamma\wedge[X_l,\Gamma\wedge X_k]\\
&=-[\Gamma,\Gamma]\wedge X_k\wedge X_l-\Gamma\wedge [\Gamma,X_k]\wedge X_l+\Gamma\wedge [X_l,\Gamma]\wedge X_k+\Gamma\wedge\Gamma\wedge[X_l,X_k]\\
&=-\Gamma\wedge [\Gamma,X_k]\wedge X_l+\Gamma\wedge [X_l,\Gamma]\wedge X_k=0.
\end{align*}
For $[X_1\wedge X_2,\Gamma\wedge X_j],j=1,3,$ due to $[X_1,\Gamma]=[X_2,\Gamma]=[X_1,X_3]=[X_2,X_3]=0,$ we can get that
\begin{align*}
    [X_1\wedge X_2,\Gamma\wedge X_j]&=[X_1\wedge X_2,\Gamma]\wedge X_j-\Gamma\wedge[X_1\wedge X_2,X_j]\\
    &=-[\Gamma,X_1\wedge X_2]\wedge X_j+\Gamma\wedge[X_j,X_1\wedge X_2]\\
    &=-[\Gamma,X_1]\wedge X_2\wedge X_j-X_1\wedge[\Gamma,X_2]\wedge X_j+\Gamma\wedge[X_j,X_1]\wedge X_2+\Gamma\wedge X_1\wedge[X_j,X_2]=0.
\end{align*}
For $[X_1\wedge X_2,X_1\wedge X_2],$ due to $[X_1,X_2]=0,$ we have
\begin{align*}
    [X_1\wedge X_2,X_1\wedge X_2]&=[X_1\wedge X_2,X_1]\wedge X_2-X_1\wedge[X_1\wedge X_2,X_2]\\
    &=-[X_1,X_1\wedge X_2]\wedge X_2+X_1\wedge[X_2,X_1\wedge X_2]\\
    &=-[X_1,X_1]\wedge X_2\wedge X_2-X_1\wedge[X_1,X_2]\wedge X_2+X_1\wedge[X_2,X_1]\wedge X_2+X_1\wedge X_1\wedge[X_2,X_2]\\
    &=2X_1\wedge[X_2,X_1]\wedge X_2=0.
\end{align*}

Combining all the terms, we conclude:
\[
[\mathcal{N}, \mathcal{N}] = 0.
\]
Thus, \(\mathcal{N}\) is a generalized 2-Poisson tensor. Moreover, we can see that 
$$\mathcal N(\cdot,dh)=X_3(h)\Gamma.$$
 This shows that the vector field $\Gamma$ is \emph{quasi-generalized Hamiltonian} with respect to the 2-Poisson tensor $\mathcal{N}$. Moreover, we calculate that
 $$L_\Gamma(X_3(h))=[\Gamma,X_3](h)+X_3(\Gamma(h))=0.$$ Hence, we complete the proof.
\end{proof}
\begin{remark}\label{R4}
According to Remark~\ref{RPla}, we know that for the case with 4 symmetries, it is not possible to construct a non-decomposable generalized Poisson 3-tensor. However, using the construction method from Theorem~\ref{T2}, one can construct several non-decomposable generalized 2-Poisson tensors. We omit the details here.
\end{remark}
\begin{remark}
It is shown in \cite{Gautheron} that every Nambu tensor is decomposable. However, the generalized Poisson tensors constructed in the two theorems above are non-decomposable. Therefore, they cannot be Nambu tensors. This result shows that for certain dynamical systems, given some generalized Poisson structure, they can be viewed as quasi-generalized Hamiltonian systems but not as quasi-Nambu Hamiltonian systems.
\end{remark}
   
	Through observation, we obtain the following result, which is even more remarkable.
	\begin{theorem}\label{T3}
	Let $\Gamma$ be a vector field satisfying the conditions in Theorem \ref{T1}. Then the multi-vector field 
	\[
	\mathcal{N}_i = \Gamma \wedge X_1 \wedge \cdots \wedge\hat X_i\wedge\cdots\wedge X_{s-4}\wedge X_{s-3}\wedge X_{s-2} + \Gamma \wedge X_1 \wedge \cdots\wedge \hat X_i    \wedge X_{s-4} \wedge X_{s-1} \wedge X_s \quad i=1,...,s-4
	\]
  \noindent  
	defines  $(s-4)$  generalized $(s-2)$-Poisson tensors  on $M$, and the system is \emph{multi-quasi-Hamiltonian} with respect to $\mathcal{N}_i,i=1,...,s-4$. The index is \((s - 4)\) because (except for $\Gamma$) the first \(s - 4\) vector fields in the two parts of  \(N\) are identical. We want the last two vector fields of $\mathcal{N}_i$ to remain unchanged in order to preserve the non-decomposability of \( \mathcal{N}_i\). Therefore, by successively removing the common elements \(X_1, \dots, X_{s-4}\) from both parts of \(\mathcal{N}\), we obtain exactly \( (s - 4) \) such multivector fields \(\mathcal{N}_i\).

	Moreover, there exists generalized Poisson structure $\mathcal{J}_i$ of order $(s-2)$ on $M$, proportional to $\mathcal{N}_i$, such that $\Gamma$ is the Hamiltonian vector field associated with the multi-pair of functions $(h_1,...,\hat h_i,...,h_{s-2})$ under the generalized Poisson structure defined by $\mathcal{J}$,  with $i=1,\dots,s-2$.
	\end{theorem}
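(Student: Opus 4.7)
The plan is to mirror the proof of Theorem~\ref{T1} almost verbatim, starting from the factorization
\[
\mathcal{N}_i \;=\; \Gamma \wedge X_1 \wedge \cdots \wedge \widehat{X}_i \wedge \cdots \wedge X_{s-4} \wedge \bigl(X_{s-3}\wedge X_{s-2} + X_{s-1}\wedge X_s\bigr),
\]
which exhibits each $\mathcal{N}_i$ as the wedge of a decomposable $(s-4)$-vector with a non-decomposable bivector. The independence of $X_{s-3}, X_{s-2}, X_{s-1}, X_s$ makes the bivector non-decomposable, and this non-decomposability passes to $\mathcal{N}_i$. First I would verify the generalized Jacobi identity via Proposition~\ref{PG}, split by the parity of $s-2$. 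When $s-2$ is even, I would expand $[\mathcal{N}_i, \mathcal{N}_i]$ through~\eqref{E2}; because every pair in $\{\Gamma, X_1, \dots, X_s\}$ commutes, every Lie bracket in the expansion vanishes and $[\mathcal{N}_i, \mathcal{N}_i]=0$ exactly as in Theorem~\ref{T1}. When $s-2$ is odd, I would verify the two conditions of Proposition~\ref{PG}: the self-wedge of each decomposable summand is zero by repeated factors, while the cross wedges share $\Gamma$ together with $\{X_j : j\in\{1,\dots,s-4\}\setminus\{i\}\}$, producing a repeated factor term-by-term; the same repetition argument handles the auxiliary condition $\sum_u (i(dx^u)\mathcal{N}_i)\wedge(L_{\partial/\partial x^u}\mathcal{N}_i)=0$.

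Next I would establish the quasi-Hamiltonian property by the contraction
\[
\mathcal{N}_i(\,\cdot\,, dh_1, \dots, \widehat{dh_i}, \dots, dh_{s-2}) \;=\; \tilde h_i\, \Gamma,
\]
where $\tilde h_i$ is the sum of two $(s-3)\times(s-3)$ determinants in the entries $X_j(h_k)$, obtained from the determinants of~\eqref{MathcalN} by deleting the row indexed by $X_i$ and the column indexed by $h_i$. Applying $L_\Gamma$ entrywise with $[X_j,\Gamma]=0$ and $\Gamma(h_k)=0$, together with the Leibniz rule for the Lie derivative of a determinant, yields $L_\Gamma \tilde h_i = 0$, so $\tilde h_i$ is a first integral of $\Gamma$. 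On the open dense set where $\tilde h_i \neq 0$, I would then set $\mathcal{J}_i := \tilde h_i^{-1}\mathcal{N}_i$ and check via Proposition~\ref{PG}, with an analogous dimension condition in the odd-order case, that $\mathcal{J}_i$ remains a generalized Poisson tensor; the identity $\Gamma = \mathcal{J}_i(\,\cdot\,, dh_1, \dots, \widehat{dh_i}, \dots, dh_{s-2})$ then realises $\Gamma$ as a genuine generalized Hamiltonian vector field with the reduced tuple of Hamiltonians $(h_1, \dots, \widehat{h_i}, \dots, h_{s-2})$.

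The hard part will be the Plücker verification in the odd-order case. Deleting the shared factor $X_i$ shrinks the pool of repeated vector fields available in the cross wedge $(i(\alpha)A_i)\wedge(i(\beta)B_i) + (i(\alpha)B_i)\wedge(i(\beta)A_i)$ of the two decomposable summands $A_i, B_i$ of $\mathcal{N}_i$ down to $\{\Gamma\}\cup\{X_j : j\in\{1,\dots,s-4\}\setminus\{i\}\}$. One must check that this still forces a repeated factor in every cross term; the argument goes through provided at least one shared $X_j$ survives, i.e.\ $s-5\geq 1$, which amounts to $s\geq 6$. The borderline $s=5$ would need separate scrutiny or a minor adjustment to the stated range, and I would flag this explicitly in the final write-up rather than quietly subsuming it.
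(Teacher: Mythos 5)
Your proposal follows exactly the route the paper intends: its entire proof of Theorem~\ref{T3} is the single line ``Similar to the proof of Theorem~\ref{T1},'' and your write-up supplies precisely the details that reference points to --- the factorization of $\mathcal{N}_i$, the Schouten-bracket computation when $s-2$ is even, the Pl\"ucker and auxiliary conditions when $s-2$ is odd, the contraction identity producing the multiplier $\tilde h_i$ as a sum of two $(s-3)\times(s-3)$ minors, and the rescaling $\mathcal{J}_i=\tilde h_i^{-1}\mathcal{N}_i$. Your caveat about the borderline case is not a defect of your argument but a genuine gap in the theorem as stated: for $s=5$ one gets $\mathcal{N}_1=\Gamma\wedge X_2\wedge X_3+\Gamma\wedge X_4\wedge X_5$, a non-decomposable trivector, and by the paper's own Remark~\ref{RPla} such a tensor violates the Pl\"ucker condition and hence is not a generalized Poisson tensor; the pigeonhole on the shared pool $\{\Gamma\}\cup\{X_j: j\neq i\}$ closes only when $s\geq 7$ among odd $s$, while even $s$ (including $s=6$) is safe because there only $[\mathcal{N}_i,\mathcal{N}_i]=0$ is required. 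The paper's proof-by-reference does not address this, so your explicit flag is a genuine improvement rather than a deviation.
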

\begin{proof}
	Similar to the proof of Theorem \ref{T1}.
\end{proof}

\section{Applications}
\begin{example}
Consider the dynamical vector field
\[
\Gamma = \sum_{i=1}^6 \left( x_{2i} \frac{\partial}{\partial x_{2i-1}} - x_{2i-1} \frac{\partial}{\partial x_{2i}} \right)
\]
on $\mathbb{R}^{12}$ with coordinates $(x_1, \dots, x_{12})$. This vector field generates simultaneous rotations in the $(x_1, x_2)$, $(x_3, x_4)$, $(x_5, x_6)$, $(x_7, x_8)$, $(x_9, x_{10})$, and $(x_{11}, x_{12})$-planes. Physically, one may regard $\Gamma$ as the infinitesimal generator of a system with twelve independent angular momenta evolving in synchrony.

\medskip

Three conserved quantities (first integrals) are
\[
h_1 = x_1 x_3 + x_2 x_4, \quad h_2 = x_1 x_4 - x_2 x_3, \quad h_3 = x_5 x_7 + x_6 x_8.
\]
These invariants satisfy
\[
\Gamma(h_1) = 0, \quad \Gamma(h_2) = 0, \quad \Gamma(h_3) = 0,
\]
so they are constant along the flow of $\Gamma$.

\medskip

We also introduce the ``cross-plane'' rotation vector fields:
\[
Y_1 = x_1 \frac{\partial}{\partial x_3} + x_2 \frac{\partial}{\partial x_4}, \quad
Y_2 = x_1 \frac{\partial}{\partial x_4} - x_2 \frac{\partial}{\partial x_3}, \quad
Y_3 = x_5 \frac{\partial}{\partial x_7} + x_6 \frac{\partial}{\partial x_8},
\]
\[
Y_4 = x_5 \frac{\partial}{\partial x_8} - x_6 \frac{\partial}{\partial x_7}, \quad
Y_5 = x_9 \frac{\partial}{\partial x_{11}} + x_{10} \frac{\partial}{\partial x_{12}}.
\]
These vector fields satisfy the commutation relations
\[
[Y_i, Y_j] = 0, \quad [Y_i, \Gamma] = 0 \quad \text{for } i, j = 1, \dots, 5,
\]
so they are commuting symmetries of the dynamics.

\medskip
The non-degeneracy condition of $ Y_1 \wedge Y_2 \wedge Y_3 + Y_1 \wedge Y_4 \wedge Y_5$ in Theorem~\ref{T1} reduces to computing the matrix
\[
\mathcal{M} = \det \begin{pmatrix}
Y_1(h_1) & Y_1(h_2) & Y_1(h_3) \\
Y_2(h_1) + Y_4(h_1) & Y_2(h_2) + Y_4(h_2) & Y_2(h_3) + Y_4(h_3) \\
Y_3(h_1) + Y_5(h_1) & Y_3(h_2) + Y_5(h_2) & Y_3(h_3) + Y_5(h_3)
\end{pmatrix}
= \begin{pmatrix}
x_1^2 + x_2^2 & 0 & 0 \\
0 & x_1^2 + x_2^2 & 0 \\
0 & 0 & x_5^2 + x_6^2
\end{pmatrix},
\]
with determinant
\[
\det(\mathcal{M}) = (x_1^2 + x_2^2)^2 (x_5^2 + x_6^2).
\]
This determinant is non-zero on the open dense set $\{ (x_1, \dots, x_{12}) \neq (0, \dots, 0) \}$.

\medskip
\paragraph{Generalized 4-Poisson Tensor Construction}
According to Theorem~\ref{T1}, one can form the 4-vector field
\[
\mathcal{N} = \Gamma \wedge Y_1 \wedge Y_2 \wedge Y_3 + \Gamma \wedge Y_1 \wedge Y_4 \wedge Y_5.
\]
Explicitly,
\[
\begin{aligned}
\mathcal{N} &= \sum_{i=1}^6 \left( x_{2i} \frac{\partial}{\partial x_{2i-1}} - x_{2i-1} \frac{\partial}{\partial x_{2i}} \right) \wedge \left( x_1 \frac{\partial}{\partial x_3} + x_2 \frac{\partial}{\partial x_4} \right) \\
&\quad \wedge \left( \left( x_1 \frac{\partial}{\partial x_4} - x_2 \frac{\partial}{\partial x_3} \right) \wedge \left( x_5 \frac{\partial}{\partial x_7} + x_6 \frac{\partial}{\partial x_8} \right) \right. \\
&\quad \left. + \left( x_5 \frac{\partial}{\partial x_8} - x_6 \frac{\partial}{\partial x_7} \right) \wedge \left( x_9 \frac{\partial}{\partial x_{11}} + x_{10} \frac{\partial}{\partial x_{12}} \right) \right).
\end{aligned}
\]
After expansion, $\mathcal{N}$ is a skew-symmetric 4-vector field whose coefficients are polynomials in $(x_1, \dots, x_{12})$.


By the assumption on the vector fields $\Gamma, Y_1, \dots, Y_5$, we can deduce that 
\[
[\mathcal{N}, \mathcal{N}] = 0,
\]
hence $\mathcal{N}$ is a generalized 4-Poisson tensor. Moreover, the dynamics $\Gamma$ can be recovered in the quasi-generalized Hamiltonian form
\[
\det(\mathcal{M}) \Gamma(f) = \mathcal{N} \left( dh_1, dh_2, dh_3, df \right), \quad \forall f \in C^\infty(\mathbb{R}^{12}).
\]


\end{example}

\begin{example}
Let $M=\mathbb{R}^7$ with coordinates $(x_1,...,x_7)$.

\begin{itemize}
  \item {Dynamical vector field:}
  \[
    \Gamma = \frac{\partial}{\partial x_7}.
  \]

  \item {Commuting symmetries ($m=6$):}
  \[
    X_i = \frac{\partial}{\partial x_i}, \quad
   i=1,...,6.
  \]
  These satisfy
  \[
    [X_i,X_j]=0, \qquad [X_i,\Gamma]=0, \quad i,j=1,...,6.
  \]

  \item {First integrals of $\Gamma$:}
  \[
    h_1=x_1,\quad h_2=x_2,\quad h_3=x_3,\quad h_4=x_4
  \]
  since $\Gamma(h_j)=\tfrac{\partial h_j}{\partial x_7}=0$.

  \item {Nondegeneracy:}  
  The matrix
  \[
    \mathcal M = (X_1\wedge X_2\wedge X_3\wedge X_4+X_1\wedge X_2\wedge X_5\wedge X_6)(dh_1,dh_2,dh_3,dh_4)
      = \begin{pmatrix}
          1 & 0 & 0&0 \\
          0 & 1 & 0&0 \\
          0 & 0 & 1&0\\
          0&0&0&1
        \end{pmatrix}
  \]
  has determinant $\det(\mathcal M)=1\neq 0$.
\end{itemize}

Therefore, by Theorem 3.1, the $5$-vector field
\[
 \mathcal N = \Gamma \wedge X_1 \wedge X_2 \wedge X_3\wedge X_4+\Gamma \wedge X_1\wedge X_2\wedge X_5\wedge X_6
    =  \frac{\partial}{\partial x_7}
      \wedge \frac{\partial}{\partial x_1}
      \wedge \frac{\partial}{\partial x_2}
      \wedge \left(\frac{\partial}{\partial x_3}\wedge \frac{\partial}{\partial x_4}+\frac{\partial}{\partial x_5}
      \wedge \frac{\partial}{\partial x_6}
      \right)
\]
defines a non-decomposable generalized 5-Poisson tensor  on $\mathbb{R}^7$.

Finally, for any smooth function $f \in C^\infty(\mathbb{R}^7)$,
\[
  \Gamma(f) \;=\; \mathcal{N} \big(df,dh_1,dh_2,dh_3,dh_4\big),
\]
so $\Gamma$ is the Hamiltonian vector field associated with the functions
$(h_1,h_2,h_3,h_4)$ under the non-decomposable generalized Poisson structure $\mathcal N$.

\medskip

\noindent\paragraph{Rescaling to $\mathcal J$} In Theorem~\ref{T3} when $2m>n+4,$ one may rescale $\mathcal N$ by the (nonvanishing) factor
$\tilde h=\det\big(X_\ell(h_s)\big)_{ s\neq i}$ so that $\mathcal J=\tilde h^{-1}\mathcal N$ makes $\Gamma$ generalized
Hamiltonian with respect to $(h_1,h_2,h_3,h_4)$. Here $2\cdot 6=12>7+4$ and
\[
\big(X_\ell(h_s)\big)_{ s\neq i} = \mathrm{Id}_4\quad\Longrightarrow\quad \tilde h=1,
\]
hence $\mathcal J=\mathcal N$.

\medskip

Theorem~\ref{T3} associates the $4$-vector fields:

\[
\mathcal{N}_1=\frac{\partial}{\partial x_6}\wedge\frac{\partial}{\partial x_2}\wedge\frac{\partial}{\partial x_3}\wedge \frac{\partial}{\partial x_4}+\frac{\partial}{\partial x_6}\wedge\frac{\partial}{\partial x_2}\wedge\frac{\partial}{\partial x_5}\wedge \frac{\partial}{\partial x_6},\qquad
\mathcal{N}_2=\frac{\partial}{\partial x_6}\wedge\frac{\partial}{\partial x_1}\wedge\frac{\partial}{\partial x_3}\wedge \frac{\partial}{\partial x_4}+\frac{\partial}{\partial x_6}\wedge\frac{\partial}{\partial x_1}\wedge\frac{\partial}{\partial x_5}\wedge \frac{\partial}{\partial x_6}
\]
Each $\mathcal{N}_i$ is a  non-decomposable generalized 4-Poisson tensor.
Moreover, with the pair of Hamiltonians obtained by \emph{removing} $h_i$, we get
\[
\Gamma(f) \;=\; \mathcal{N}_i\big(df, d h_j, d h_3,dh_4\big), \qquad \{i,j\}=\{1,2\}.
\]
Indeed, since $d h_1=dx_1$ and $d h_2=dx_2$, a direct contraction yields
\[
\mathcal{N}_1(df,dx_2,dx_3,dx_4)=\frac{\partial f}{\partial x_7},\quad
\mathcal{N}_2(df,dx_1,dx_3,dx_4)=\frac{\partial f}{\partial x_7},
\]
so in two cases the associated generalized Hamiltonian vector field is precisely $\Gamma$.
\end{example}
\medskip

\subsection{1-angle integrable system with three actions}

Consider an integrable Hamiltonian system with one angle variable $\theta$ and five independent action-like
constants $I_1,...,I_5$ on Manifold $\mathbb R^4\times \mathbb S^1$. Suppose the dynamics is a uniform drift in $\theta$,
\[
\dot{I}_1=\cdots=\dot I_5=0,\qquad \dot{\theta}=\omega_0,\quad \omega_0\in\mathbb{R}\setminus\{0\}.
\]
Identify
\[
(x_1,\cdots,x_6)=(I_1,\cdots,I_5 ,\theta),
\]
so that the vector field is
\[
\Gamma=\frac{d}{dt}=\omega_0\,\frac{\partial}{\partial \theta}=\omega_0\,\frac{\partial}{\partial x_6}.
\]
Rescaling time by $t\mapsto \omega_0 t$ (or, equivalently, rescaling the Nambu tensor by a constant factor)
gives precisely  $\Gamma=\partial_{x_6}$ and the five commuting translational symmetries
$X_i=\partial_{x_i}$, $i=1,...,5$, with first integrals $h_i=x_i=I_i,i=1,2,3$.
Hence the non-decomposable 
 generalized 4-Poisson tensor
\[
\mathcal N=\Gamma\wedge X_1\wedge X_2\wedge X_3+\Gamma\wedge X_1\wedge X_4\wedge X_5
\]
reproduces the dynamics via
\[
\frac{df}{dt}=\Gamma(f)=N\big(df,dh_1,dh_2,dh_3\big).
\]

\begin{remark}
To model a nonconstant frequency 
\[
\dot{\theta} = \omega(I_1, \ldots, I_5),
\]
we simply replace \(\mathcal{N}\) by 
\[
\mathcal{N}_\rho = \rho(x_1, \ldots, x_5)\, \partial_{x_5} \wedge \partial_{x_1} \wedge \partial_{x_2} \wedge \partial_{x_3} + \rho(x_1, \ldots, x_5)\, \partial_{x_5} \wedge \partial_{x_1} \wedge \partial_{x_4} \wedge \partial_{x_5},
\]
and set \(\rho = \omega\). Since the degree of the Schouten bracket \([\mathcal{N}_\rho, \mathcal{N}_\rho]\) exceeds the dimension of the manifold \(\mathbb{R}^6\), we conclude that \([\mathcal{N}_\rho, \mathcal{N}_\rho] = 0\). Hence, \(\mathcal{N}_\rho\) defines a non-decomposable generalized 4-Poisson tensor. Moreover, since 
\[
\Gamma = \mathcal{N}_\rho(\,\cdot\,, dh_1, dh_2, dh_3) = \omega\, \partial_{x_5}
\]
and the functions \(h_i\) remain integrals, it follows that \(\Gamma\) is a generalized Hamiltonian vector field with respect to the generalized 4-Poisson tensor \(\mathcal{N}_\rho\).
\end{remark}

Some representative examples fitting this structural template include:

\begin{itemize}
\item \textbf{One-phase limit of an integrable mode}: A decoupled mode with phase variable \(\theta\) evolving at a possibly nonconstant frequency, while five action-like invariants \(I_1, I_2, I_3, I_4, I_5\) remain fixed. This can be viewed as a reduced limit of a higher-dimensional integrable system.
  
\item \textbf{Reduced symmetric rigid body or rotor}: A rigid body spinning around a symmetry axis, with \(\theta\) representing the spin angle. Due to the symmetry, five combinations of angular momentum components are conserved, forming the invariants \(I_i\). The reduced system evolves in \(\theta\) while the \(I_i\) serve as parameters.
\end{itemize}

\begin{remark}
Allowing a frequency of the form
\[
\dot{\theta} = \omega(I_1, I_2, I_3, I_4, I_5)
\]
corresponds to replacing the constant-coefficient, non-decomposable generalized Poisson tensor \(N\) with
\[
\mathcal{N}_\rho = \rho(x_1, \ldots, x_5)\, \partial_{x_5} \wedge \partial_{x_1} \wedge \partial_{x_2} \wedge \partial_{x_3} + \rho(x_1, \ldots, x_5)\, \partial_{x_5} \wedge \partial_{x_1} \wedge \partial_{x_4} \wedge \partial_{x_5}.
\]
Since the Schouten bracket \([\mathcal{N}_\rho, \mathcal{N}_\rho]\) has degree higher than the dimension of the ambient manifold \(\mathbb{R}^6\), it vanishes identically. Hence, \(\mathcal{N}_\rho\) defines a valid non-decomposable generalized 4-Poisson tensor.

The dynamics is then governed by the generalized Hamiltonian vector field
\[
\Gamma = \mathcal{N}_\rho(\,\cdot\,, dh_1, dh_2, dh_3) = \omega\, \partial_{x_5},
\]
and the functions \(h_i = x_i = I_i\) remain conserved. Therefore, the generalized Hamiltonian structure under a non-decomposable generalized Poisson tensor remains valid even when the frequency depends nontrivially on all five invariants.
\end{remark}

\subsection{4D cyclic Lotka--Volterra with spectators: explicit quasi-Hamiltonian form}


Consider the 4D cyclic Lotka--Volterra system
\[
\begin{aligned}
\dot x_1 &= x_1(x_2 - x_4),\\
\dot x_2 &= x_2(x_3 - x_1),\\
\dot x_3 &= x_3(x_4 - x_2),\\
\dot x_4 &= x_4(x_1 - x_3).
\end{aligned}
\]
Its vector field is
\[
\Gamma \;=\; \Gamma_1\,\partial_{x_1}+\Gamma_2\,\partial_{x_2}+\Gamma_3\,\partial_{x_3}+\Gamma_4\,\partial_{x_4},
\quad
\begin{cases}
\Gamma_1=x_1(x_2-x_4),\\
\Gamma_2=x_2(x_3-x_1),\\
\Gamma_3=x_3(x_4-x_2),\\
\Gamma_4=x_4(x_1-x_3).
\end{cases}
\]

Two first integrals are
\[
h_\Sigma=x_1+x_2+x_3+x_4,\qquad h_\mathcal{N}=x_1x_2x_3x_4,
\]
so $\Gamma(h_\Sigma)=0$ and $\Gamma(\log h_\mathcal{N})=0$.
The Euler (scaling) symmetry
\[
X_s \;=\; x_1\partial_{x_1}+x_2\partial_{x_2}+x_3\partial_{x_3}+x_4\partial_{x_4}
\]
commutes with $\Gamma$, i.e.\ $[X_s,\Gamma]=0$, and
\[
(\Gamma\wedge X_s)(d h_\Sigma)=-\,h_\Sigma\,\Gamma.
\]

The additional spectator variables $y_1,y_2$ are coordinates that evolve trivially, i.e.\ their values remain constant along any trajectory. 
Spectators can be interpreted as {environmental labels} (such as a fixed habitat index, external condition, or conserved resource level) or as {parameters promoted to dynamical variables} that remain constant on the relevant timescale. 

Adjoin two spectators $y_1,y_2$ with trivial dynamics
\[
\dot y_1=0,\qquad \dot y_2=0,
\]
and let
\[
Y_1=\partial_{y_1},\qquad Y_2=\partial_{y_2}.
\]
They commute with everything: $[Y_i,\Gamma]=[Y_i,X_s]=[Y_1,Y_2]=0$.
We take the “Hamiltonian” function $h:=h_\Sigma$, for which
\[
\Gamma(h)=Y_1(h)=Y_2(h)=0,\qquad X_s(h)=h_\Sigma.
\]

Define
\[
\mathcal{N} \;=\; \Gamma\wedge Y_1 \;+\; \Gamma\wedge X_s \;+\; Y_1\wedge Y_2.
\]
Its full coordinate expansion on $(x_1,x_2,x_3,x_4,y_1,y_2)$ is
\[
\mathcal{N} \;=\;
\sum_{i=1}^4 \Gamma_i\,\partial_{x_i}\wedge\partial_{y_1}
\;+\;
\sum_{1\le i<j\le 4} B_{ij}\,\partial_{x_i}\wedge\partial_{x_j}
\;+\;
\partial_{y_1}\wedge\partial_{y_2}.
\]

with coefficients
\[
\begin{aligned}
&\Gamma_1=x_1(x_2-x_4),\quad
\Gamma_2=x_2(x_3-x_1),\quad
\Gamma_3=x_3(x_4-x_2),\quad
\Gamma_4=x_4(x_1-x_3),\\[2mm]
&B_{12}=\Gamma_1 x_2-\Gamma_2 x_1 \;=\; x_1x_2\,(x_1+x_2-x_3-x_4),\\
&B_{13}=\Gamma_1 x_3-\Gamma_3 x_1 \;=\; 2\,x_1x_3\,(x_2-x_4),\\
&B_{14}=\Gamma_1 x_4-\Gamma_4 x_1 \;=\; x_1x_4\,(-x_1+x_2+x_3-x_4),\\
&B_{23}=\Gamma_2 x_3-\Gamma_3 x_2 \;=\; x_2x_3\,(-x_1+x_2+x_3-x_4),\\
&B_{24}=\Gamma_2 x_4-\Gamma_4 x_2 \;=\; 2\,x_2x_4\,(x_3-x_1),\\
&B_{34}=\Gamma_3 x_4-\Gamma_4 x_3 \;=\; x_3x_4\,(-x_1-x_2+x_3+x_4).
\end{aligned}
\]

For $h=h_\Sigma$,
\[
\mathcal{N}(\,\cdot\,, d h) \;=\; X_s(h)\,\Gamma \;=\; h_\Sigma\,\Gamma,
\]
so the LV dynamics is quasi-Hamiltonian with multiplier $h_\Sigma$ (a first integral). The term $Y_1\wedge Y_2$ together with $\Gamma\wedge Y_1$ makes $\mathcal{N}$ non-decomposable.

\section{Summary}

In this work, we have developed a systematic geometric framework that unifies Hamiltonian, quasi-Hamiltonian, and Nambu--Poisson formulations through the use of generalized higher-order Poisson tensors. The central result demonstrates that families of commuting symmetries and conserved quantities naturally induce non-decomposable generalized Poisson tensors, under which a dynamical system can be represented as a quasi-Hamiltonian or genuinely Hamiltonian system. This construction extends previous symmetry-based approaches, such as Hojman’s method, into the higher-order setting, where Nambu-type and generalized Poisson structures coexist within a common geometric scheme.

We have established explicit theorems guaranteeing the existence of quasi-generalized Hamiltonian realizations for systems admitting sufficiently many commuting symmetries, and we have provided concrete examples illustrating these constructions, including multi-angle systems and generalized Lotka--Volterra dynamics. The appearance of non-decomposable generalized Poisson tensors reveals a class of systems that, while not Nambu--Poisson in the strict sense, still possess Hamiltonian-type representations governed by higher-order brackets.

These results highlight the unifying role of Nambu--Poisson geometry in the study of symmetry-induced structures and open the way for further investigations into integrability, reduction procedures, and quantization within the framework of higher-order geometric mechanics.

	\section*{Acknowledgments}
	The research of the second author is supported by
	NSFC (Grant No.
	12401234). 
	$\\$
	
	\noindent$\mathbf{Conflict\;of\;interest\;statement.}$ On behalf of all authors, the corresponding author states that there is no conflict of interest.
	
	$\\$
	\noindent$\mathbf{Data\;availability.}$ Data sharing is not applicable to this article as no new data were created or analyzed in this study.

\end{document}